\tikzstyle{block}=[rectangle,text centered]
\newtheorem{thrm}{Theorem}
\newtheorem{remark}{Remark}
\newtheorem{lemma}{Lemma}
\newtheorem{assumpt}{Assumption}
\newtheorem{corollary}{Corollary}
\theoremstyle{definition}
\title{\LARGE \bf
A System Level Approach to Regret Optimal Control
}
\author{Alexandre Didier, Jerome Sieber and Melanie N. Zeilinger
\thanks{A. Didier, J. Sieber and M. N. Zeilinger are members of the Institute for Dynamic Systems and Control (IDSC), ETH Zurich, 8092 Zurich, Switzerland (e-mail: \{adidier, jsieber, mzeilinger\}@ethz.ch)}
}
\begin{document}

\maketitle
\thispagestyle{empty}
\pagestyle{empty}

\begin{abstract}
We present an optimisation-based method for synthesising a dynamic regret optimal controller for linear systems with potentially adversarial disturbances and known or adversarial initial conditions. The dynamic regret is defined as the difference between the true incurred cost of the system and the cost which could have optimally been achieved under any input sequence having full knowledge of all future disturbances for a given disturbance energy. This problem formulation can be seen as an alternative to classical $\bm{\mathcal{H}_2}$- or $\bm{\mathcal{H}_\infty}$-control. The proposed controller synthesis is based on the system level parametrisation, which allows reformulating the dynamic regret problem as a semi-definite problem. This yields a new framework that allows to consider structured dynamic regret problems, which have not yet been considered in the literature. For known pointwise ellipsoidal bounds on the disturbance, we show that the dynamic regret bound can be improved compared to using only a bounded energy assumption and that the optimal dynamic regret bound differs by at most a factor of $\nicefrac{\bm{2}}{\bm{\pi}}$ from the computed solution. Furthermore, the proposed framework allows guaranteeing state and input constraint satisfaction. 
\end{abstract}
\begin{IEEEkeywords}
Predictive control for linear systems, 
constrained control, 
optimal control, 
robust control.
\end{IEEEkeywords}
\section{INTRODUCTION} 
\IEEEPARstart{I}{n} classical control, the controller objective is commonly selected to minimise a given cost function. The two most common formulations are $\mathcal{H}_2$-control, for which the expected value of the cost is minimised under the assumption of stochastic noise, or $\mathcal{H}_\infty$-control, which minimises the worst-case cost given a potentially adversarial disturbance, see e.g. \cite{hassibi1999indefinite} for an overview. While $\mathcal{H}_2$-control can often be overly optimistic, $\mathcal{H}_\infty$-control often suffers from being overly conservative. 

Recently, there has been a growing interest in a more adaptive notion of a cost metric, i.e. regret minimisation. The idea behind this metric is simple. For any disturbance experienced by a system, a certain cost will be incurred even if the disturbances are known in advance. Therefore, the performance of a control algorithm is compared to the cost which could have been achieved, had all disturbances been known in advance. If an upper bound on this regret can be found, then the worst-case increase of the cost, which is incurred from not knowing all disturbances, is bounded. 

A popular metric in regret minimisation is policy regret. Here, the optimal benchmark cost is obtained by restricting the non-causal policy, which has access to all disturbances, to a certain class, e.g. static linear stabilising feedback controllers. 
The goal of policy regret minimising algorithms is to learn the optimal static feedback policy over a single episode of time length $T$, e.g., via gradient descent methods on disturbance feedback controllers.
This setting gives rise to interesting results for unknown, adversarial convex costs such as e.g. sublinear regret with respect to $T$ in \cite{agarwal2019online}, i.e. the average incurred regret goes to 0 as $T$ goes to infinity and the optimal non-causal controller in the restricted set is learned using computationally efficient algorithms. 
However, the provided regret bounds are often conservative with respect to the system parameters. 
Another popular notion for regret minimisation is dynamic regret. Here, the benchmark, which algorithms are compared to, no longer restricts the control actions to lie in a policy class, but is achieved through the optimal non-causal control sequence. For linear dynamics and quadratic costs, it was shown in \cite{goel2021regret} for a finite horizon and in \cite{sabag2021regret} for an infinite horizon, that the dynamic regret bound is proportional to the experienced disturbance energy. 
The derived regret optimal controller adapts to the experienced disturbances and can thus outperform the $\mathcal{H}_2$- and $\mathcal{H}_\infty$-controller. 

\textit{Contributions.} In Section~\ref{sec:RegOptSLS}, we present an optimisation-based method to synthesize a controller that minimises dynamic regret with respect to a given or an adversarial initial condition. The dynamic regret optimal controller is synthesised via a convex semi-definite program (SDP) rather than the bisection method in \cite{goel2021regret}. 
This SDP formulation, based on the system level parametrisation (SLP) introduced in \cite{anderson2019system}, enables to leverage pointwise ellipsoidal bounds on the disturbance to improve the regret bound, rather than only a bound on the disturbance energy used in the literature, as shown in Section~\ref{sec:PWB}. 
Additionally, a lower bound on the optimal dynamic regret for adversarial pointwise ellipsoidally bounded disturbances is provided. Finally, in Section~\ref{sec:Constr}, we show how the formulation can be extended to systems with state and input constraints.

\textit{Related Work.} A state-space formulation for the dynamic regret optimal controller with respect to bounded energy disturbances was derived for zero initial condition, for a finite horizon in \cite{goel2021regret} and for an infinite horizon in \cite{sabag2021regret}. The finite horizon solution can be found through bisection of a suboptimal performance level, similarly to $\mathcal{H}_\infty$-control. Concurrently to the work presented in this letter, \cite{martin2022safe} 
formulated an optimisation-based regret optimal synthesis only for disturbances with bounded energy, which allows including state and input constraints but does not consider the effect of pointwise ellipsoidally bounded disturbances nor a given initial state. 
In \cite{nonhoff2021online}, a dynamic regret bound is computed for constrained systems which are not subject to disturbances with unknown, strongly convex cost functions. 
The policy regret with respect to the stable stationary policy which robustly satisfies constraints on the system is minimised in \cite{li2021safe}, providing a sublinear in time regret bound. Finally, in \cite{anderson2019system} the synthesis of $\mathcal{H}_2$- and $\mathcal{H}_\infty$-controllers using the SLP is shown, which is closely related to the proposed synthesis. 

\textit{Notation.} A positive semi-definite matrix $A\in\mathbb{R}^{n\times n}$ fulfills $A\succeq 0$ and $A\succeq B$ implies that $A-B\succeq 0$ for some matrix $B\in\mathbb{R}^{n\times n}$. A block diagonal matrix $A\in\mathbb{R}^{nm\times nm}$ containing the matrices $B_i\in\mathbb{R}^{m\times m}$ for all $i\in\{1,{\dots}, n\}$ on its diagonal is denoted $A=\textup{blkdiag}(B_1,{\dots}, B_n)$. For a given matrix $A\in\mathbb{R}^{n\times n}$, the entry in the $i$-th row and $j$-th column is denoted as $[A]_{i,j}$, the $i$-th row and column of the matrix are given as $[A]_{i,:}$ and $[A]_{:,i}$, respectively, and the block matrix of appropriate dimensions at position $(i,j)$ as $[[A]]_{i,j}$ starting at index 1.
For a set $\mathcal{A}$, we denote as $\mathcal{A}^i$ the $i$-time Cartesian product $\mathcal{A}{\times} {\dots} {\times} \mathcal{A}$. For a vector $x\in\mathbb{R}^n$, $\norm{x}$ denotes the Euclidean norm and for a matrix $A\in\mathbb{R}^{n\times n}$, the induced $2$-norm $\norm{A}=\max_{\;\norm{x}{\neq} 0}\frac{\norm{Ax}}{\norm{x}}=\max_{\;\norm{x}=1}\norm{Ax}$.
The identity matrix and the vector of ones of appropriate dimensions are denoted as $\mathbb{I}$ and $\mathbf{1}$, respectively.
\section{PROBLEM FORMULATION}\label{sec:PF} 
We consider linear time-varying systems with discrete-time dynamics with an unknown additive disturbance of the form
\begin{equation}\label{eq:sysdyn}
x_{k+1}=A_k x_k+B_k u_k+E_k w_k,
\end{equation}
with states $x_k\in\mathbb{R}^n$, inputs $u_k\in\mathbb{R}^m$, unknown disturbance $w_k\in\mathbb{R}^r$ and known time-varying system matrices $A_k\in\mathbb{R}^{n\times n}$, $B_k\in\mathbb{R}^{n\times m}$ and $E_k \in\mathbb{R}^{n\times r}$.
At every time step, a known, potentially time-varying quadratic stage cost is incurred, given as
\begin{equation*}
l_k(x_k,u_k) = x_k^\top Q_kx_k + u_k^\top R_ku_k,
\end{equation*}
where the matrices $Q_k\in\mathbb{R}^{n \times n}$ and $R_k\in\mathbb{R}^{m\times m}$ are positive definite, i.e., $Q_k\succ0$ and $R_k\succ 0$.
The system is controlled over a single episode of time length $T$ and the corresponding sequences of states, inputs and disturbances are denoted as
		$\mathbf{x} = [ x_0^\top, x_1^\top, \dots, x_{T}^\top ]^\top,$ 
		$\mathbf{u} = [ u_0^\top, u_1^\top, \dots, u_{T}^\top ]^\top$ and 
		$\mathbf{w} = [ w_0^\top, w_1^\top, \dots, w_{T-1}^\top ]^\top$, respectively.
Additionally, we introduce the block diagonal cost matrices $\mathcal{Q}\in\mathbb{R}^{n(T+1)\times n(T+1)}$ and $\mathcal{R}\in\mathbb{R}^{m(T+1)\times m(T+1)}$, where
$\mathcal{Q}=\textup{blkdiag}(Q_0, \dots Q_{T})$ and $
\mathcal{R}=\textup{blkdiag}(R_0, \dots R_{T}).$
The total incurred cost during the episode is given by
\begin{equation}
J(x_0, \mathbf{u}, \mathbf{w})= \sum_{k=0}^{T} l_k(x_k,u_k) = \mathbf{x}^\top \mathcal{Q} \mathbf{x} + \mathbf{u}^\top \mathcal{R}\mathbf{u}.
\end{equation}
The objective is to minimise the dynamic regret incurred by the system, i.e. the difference to the best possible cost that could have been achieved through knowledge of all past and future disturbances. Note that minimising the dynamic regret can be seen as an alternative to classical approaches like $\mathcal{H}_2$-control, where the expected value of the incurred cost is minimised, and $\mathcal{H}_\infty$-control, where the worst-case cost is minimised. The optimal dynamic regret given an adversarial disturbance is
\begin{equation}\label{eq:DynReg1}
\textup{Regret}^* = \min_\mathbf{u} \max_{\norm{\mathbf{w}}^2=\omega}( J(x_0,\mathbf{u},\mathbf{w}) - J^*(x_0,\mathbf{w})),
\end{equation}
with $\omega\in\mathbb{R}$ and where the optimal non-causal cost is
\begin{equation}
J^*(x_0,\mathbf{w}) = \min_{\mathbf{\tilde{u}}(\mathbf{w})}  \mathbf{x}^\top \mathcal{Q} \mathbf{x} + \mathbf{\tilde{u}}^\top \mathcal{R}\mathbf{\tilde{u}}.
\end{equation} 
In Section~\ref{sec:OptNCCont}, a closed form expression for $J^*(x_0,\mathbf{w})$ and the corresponding control sequence $\mathbf{\tilde{u}}^*$ are derived. 
The dynamic regret incurred by system \eqref{eq:sysdyn} using the control sequence $\mathbf{u}^*$ is then given as
\begin{equation*}
\textup{Regret} = J(x_0,\mathbf{u}^*,\mathbf{w}) - J^*(x_0,\mathbf{w})
\end{equation*}
Note that the only assumption on $\mathbf{w}$ in the dynamic regret minimisation is that the disturbance is an $\ell_2$-signal. 
\section{REGRET OPTIMAL CONTROL VIA SYSTEM LEVEL PARAMETRISATION}\label{sec:RegOptSLS}
The problem formulated in Section~\ref{sec:PF} has been shown to have an optimal solution, with a state-space controller formulation given in \cite{goel2021regret} for zero initial condition. In \cite{goel2021regret}, the problem is converted into a $\mathcal{H}_\infty$-problem, for which the optimal solution is known. 
Similar to $\mathcal{H}_\infty$-synthesis, the controller can be computed via bisection on the performance level by imposing a linear matrix inequality (LMI) condition, see e.g., \cite{hassibi1999indefinite}. 
In the following, the solution to the optimal non-causal controller and its associated cost is detailed in Section~\ref{sec:OptNCCont} before a novel regret optimal controller synthesis via the solution of a convex SDP through the use of the SLP, see e.g. \cite{anderson2019system} for an overview, is proposed in Sections~\ref{sec:SLP} and \ref{sec:ROC}. A dynamic regret bound, which depends on the true disturbance energy which the system is exposed to, is provided, covering the case where $\norm{\mathbf{w}}^2$ is entirely unknown and the connections to $\mathcal{H}_\infty$-synthesis using the SLP are also shown in Section~\ref{sec:ROC}. The presented synthesis method allows incorporating information about pointwise bounds on the disturbance $\mathbf{w}$ in order to find a solution with a lower bound on the dynamic regret than using only the $\ell_2$-signal assumption as well as incorporating safety constraints into the synthesis, as shown in Section~\ref{sec:Ext}.
\subsection{Optimal non-causal controller} \label{sec:OptNCCont}
In order to synthesise the regret optimal controller, a closed-form solution for the optimal non-causal cost $J^*(x_0,\mathbf{w})$ is required. Note that the optimal non-causal controller and cost was derived e.g. in \cite{goel2021regret} and \cite{goel2020power} for zero initial condition and is provided below for completeness. 
By defining 
$\bm{\delta}= [ x_0^\top, \mathbf{w}^\top ]^\top,$ the state sequence can be obtained as
$\mathbf{x}=F\mathbf{u}+G\bm{\delta},$
where $F\in\mathbb{R}^{n(T+1) \times m(T+1)}$ and $G\in\mathbb{R}^{n(T+1) \times n(T+1)}$,
resulting in the optimal non-causal cost $$J^*(\bm{\delta}){=}\min_{\mathbf{\tilde{u}}} \bm{\delta}^\top G^\top \mathcal{Q}G\bm{\delta}+2\mathbf{\tilde{u}}^\top F^\top\mathcal{Q}G\bm{\delta}+\mathbf{\tilde{u}}^\top(\mathcal{R}+F^\top\mathcal{Q}F)\mathbf{\tilde{u}}.$$
The optimal non-causal control sequence can be computed as
\begin{equation}
\mathbf{\tilde{u}}^* = -(\mathcal{R}+F^\top\mathcal{Q} F)^{-1} F^\top\mathcal{Q} G\bm{\delta}
\end{equation}
and the corresponding optimal non-causal cost is given by
\begin{equation}
J^*(\bm{\delta}) = \bm{\delta}^\top G^\top (\mathcal{Q}^{-1}+F\mathcal{R}^{-1}F^\top)^{-1} G \bm{\delta} =: \bm{\delta}^\top O \bm{\delta} ,
\end{equation}
where the Woodbury matrix identity was used. Note that the optimal control sequence $\mathbf{\tilde{u}}^*$ is a linear combination of all disturbances as well as the initial state even though no structural assumptions were made. Additionally, $O$ depends only on the known system matrices $A_k, B_k$ and $E_k$ as well as the known cost functions $Q_k$ and $R_k$ and can therefore be used in Section~\ref{sec:ROC} in order to synthesise the regret optimal controller. 

\subsection{System Level Parametrisation}\label{sec:SLP}
In order to synthesise the regret optimal controller through an SDP, we make use of the SLP. This formulation allows to synthesise time-varying controllers for a wide variety of problem formulations, see e.g. \cite{anderson2019system}, and we show how it can be used in dynamic regret minimisation. We first define
$\mathcal{A} = \textup{blkdiag}(A_0, \dots, A_{T-1}, 0),$ $\mathcal{B} = \textup{blkdiag}(B_0,\dots,B_{T-1}, 0),$
and $\mathcal{E}=\textup{blkdiag}(\mathbb{I}, E_0,\dots,E_{T-2}, E_{T-1}),$ where $\mathcal{E}$ needs to be left invertible,
as well as the strictly causal, linear time-varying controller matrix 
\begin{equation*}
\mathcal{K} = \begin{bmatrix} K^{0,0}& 0 & \dots & 0 \\ K^{1,1} & K^{1,0} & \ddots & \vdots \\ \vdots & \ddots & \ddots & 0 \\ K^{T,T} & \dots & K^{T, 1} & K^{T,0} \end{bmatrix}, 
\end{equation*}
resulting in the SLP
\begin{equation}\label{eq:SLP}
\begin{split}
\mathbf{x} &= (\mathbb{I}-\mathcal{Z}(\mathcal{A}-\mathcal{B}\mathcal{K}))^{-1} \bm{\delta} =: \Phi_x \bm{\delta}, \\
\mathbf{u} &= \mathcal{K}\mathbf{x}=\mathcal{K}(\mathbb{I}-\mathcal{Z}(\mathcal{A}-\mathcal{B}\mathcal{K}))^{-1} \bm{\delta} =: \Phi_u \bm{\delta},
\end{split}
\end{equation} 
where $\mathcal{Z}$ is the block-downshift operator containing identity matrices on the first sub-diagonal and $\Phi_x, \Phi_u$ are causal operators, i.e., they are lower block diagonal.
The system response $\Phi$ is then defined as follows
\begin{equation} \label{eq:SLPrc}
\begin{bmatrix} \Phi_x \\ \Phi_u\end{bmatrix}=:\Phi=:[\Phi^0 \; \; \Phi^w]
\end{equation}
such that $\Phi$ is the stacked matrix of $\Phi_x$ and $\Phi_u$ and $\Phi_0\in\mathbb{R}^{(n+m)(T+1)\times n}$ and $\Phi_w\in\mathbb{R}^{(n+m)(T+1)\times nT}$ denote its respective column submatrices.
It then holds for all possible system responses $\Phi$, resulting from the dynamics \eqref{eq:sysdyn} that 
\begin{equation}\label{eq:SLPConstraint}
\begin{bmatrix} \mathbb{I}-\mathcal{Z}\mathcal{A} & -\mathcal{Z}\mathcal{B} \end{bmatrix} \Phi = \mathcal{E}
\end{equation}
and the controller matrix can be recovered as $\mathcal{K}=\Phi_u \Phi_x^{-1}$, which is shown in \cite[Theorem 2.1]{anderson2019system} and \cite[Theorem 1]{chen2021system}. Note that the regret optimal controller as shown in \cite{goel2021regret} consists of the optimal $\mathcal{H}_2$-controller and a linear combination of past disturbances, 
where the definitions of $M_k^{[i-1]}$ are omitted for brevity.
As shown in \cite[Theorem 2]{sieber2021system}, the class of controllers of the SLP is equivalent to the class of disturbance feedback parametrisations, which implies that the regret optimal controller can be synthesised through the SLP.

\subsection{Regret Optimal Control}\label{sec:ROC}
In this section, we derive the SDP formulation to synthesise the regret optimal controller. 
By substituting \eqref{eq:SLP} into \eqref{eq:DynReg1} and defining $C:=\textup{blkdiag}(\mathcal{Q}, \mathcal{R})$,
we can rewrite the dynamic regret problem as
\begin{equation}\label{eq:Reg2}
\begin{split}
\min_{\Phi}\max_{\norm{\mathbf{w}}^2=\omega} \bm{\delta}^\top \left( \Phi^\top C\Phi -O \right) \bm{\delta} \\
\textup{s.t. } \begin{bmatrix} \mathbb{I}-\mathcal{Z}\mathcal{A} & -\mathcal{Z}\mathcal{B} \end{bmatrix} \Phi = \mathcal{E}.
\end{split}
\end{equation}
Note that the inner maximisation in \eqref{eq:Reg2} is a non-convex quadratically constrained quadratic program (QCQP) and that the equality constraint on $\mathbf{w}$ can be relaxed to $\norm{\mathbf{w}}^2 \leq\omega$. In order to show that an exact solution to this problem can be found, we first make use of the fact that for QCQPs with one single quadratic constraint, strong duality always holds if Slater's condition is fulfilled, even if the objective or the constraint is non-convex. This result follows from the S-procedure, an implication of the commonly known S-lemma in optimisation theory, and detailed discussions can be found in \cite[Appendix B]{boyd2004convex}, \cite{polik2007survey} and \cite[Chapter 4.10.5]{ben2001lectures}.
In order to derive the dual to \eqref{eq:Reg2}, which allows formulating the main result in this letter, we first provide a quadratic form for \eqref{eq:Reg2}. By splitting $O$ into the block matrices  
\begin{equation*}
O= \left[\begin{array}{c : c} O_1 & O_2^T \\  \hdashline O_2 & O_3 \end{array}\right],
\end{equation*}
with $O_1\in\mathbb{R}^{n \times n}$, $O_2\in\mathbb{R}^{nT \times n}$ and $O_3\in\mathbb{R}^{nT \times nT}$, it follows that the inner maximisation in \eqref{eq:Reg2} is equivalent to
\begin{gather} \label{eq:Regquad}
\begin{align}
\max_{\norm{\mathbf{w}}^2\leq \omega} &\mathbf{w}^\top ( \Phi^{w\top} C \Phi^w {-}O_3 ) \mathbf{w} + 2x_0^\top ( \Phi^{0\top} C \Phi^w {-} O_2^\top ) \mathbf{w} \nonumber \\
&+ x_0^\top \left(\Phi^{0\top}C\Phi^{0}- O_1\right)x_0  
\end{align}
\end{gather}
This QCQP formulation leads to the following main result.

\begin{thrm} \label{thrm:ROSLS}
The solution to the dynamic regret problem \eqref{eq:DynReg1} can be obtained by solving
\begin{subequations}\label{eq:ROSLS}
\begin{align}
\gamma^*=\min_{\Phi,\gamma,\lambda} \;&\gamma \\
\textup{s.t. } &\lambda\geq 0 \\
& \begin{bmatrix} \mathbb{I}-\mathcal{Z}\mathcal{A} & -\mathcal{Z}\mathcal{B} \end{bmatrix}\Phi = \mathcal{E} \label{eq:SLPconstr} \\
&\hspace{-0.5cm} \begin{bmatrix}
x_0^\top O_1x_0{-}\lambda\omega{+}\gamma & x_0^\top O_2^\top &  x_0^\top\Phi^{0\top} \\ 
 O_2 x_0 & O_3+\lambda\mathbb{I} & \Phi^{w\top} \\ 
 \Phi^0x_0 & \Phi^w & C^{-1}
\end{bmatrix}\succeq 0 \label{eq:ROLMI} 
\end{align}
\end{subequations}
which is linear in the optimisation variables $\Phi, \lambda$ and $\gamma$. The regret optimal controller is given by $\mathcal{K}^*=\Phi_u^*\Phi_x^{*-1}$, where $\Phi_u^*$ and $\Phi_x^*$ are retrieved from an optimal solution $\Phi^*=[\Phi^{0*} \; \Phi^{w*}]$ of \eqref{eq:ROSLS} as defined in \eqref{eq:SLPrc}. The dynamic regret incurred by system \eqref{eq:sysdyn} is given by
\begin{equation} \label{eq:RegB}
\textup{Regret}\leq\textup{Regret}^* = \gamma^*.
\end{equation}
\end{thrm}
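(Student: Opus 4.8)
The plan is to convert the min--max program \eqref{eq:Reg2} into the SDP \eqref{eq:ROSLS} in two moves: dualising the inner maximisation over $\mathbf{w}$ via the S-procedure, and then linearising the resulting matrix inequality with a Schur complement. Since $J^*(x_0,\mathbf{w})$ is by construction the minimal achievable cost, the regret is nonnegative for every $\mathbf{w}$. I would exploit this to argue that, for each fixed $\Phi$, the quadratic objective in \eqref{eq:Regquad} attains its maximum over the ball $\norm{\mathbf{w}}^2\leq\omega$ on the boundary sphere $\norm{\mathbf{w}}^2=\omega$ (for $x_0=0$ this is immediate, as the objective is then a nonnegative homogeneous quadratic that is radially nondecreasing), so that relaxing the equality constraint in \eqref{eq:Reg2} to the inequality $\norm{\mathbf{w}}^2\leq\omega$ leaves the optimal value unchanged. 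This relaxation is precisely what makes the subsequent duality step applicable.

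Next, I would fix $\Phi$ and view the inner problem as a QCQP in $\mathbf{w}$ with the single quadratic constraint $\norm{\mathbf{w}}^2\leq\omega$. Introducing an epigraph variable $\gamma$, the bound $\max_{\norm{\mathbf{w}}^2\leq\omega} f(\mathbf{w})\leq\gamma$, where $f$ denotes the objective in \eqref{eq:Regquad}, is equivalent to requiring $\gamma-f(\mathbf{w})\geq0$ whenever $\omega-\norm{\mathbf{w}}^2\geq0$. Because only a single quadratic constraint is present and Slater's condition holds for $\omega>0$, the S-procedure cited in \cite[Appendix~B]{boyd2004convex}, \cite{polik2007survey} and \cite[Chapter~4.10.5]{ben2001lectures} applies with no duality gap, yielding the equivalent statement that there exists $\lambda\geq0$ with $\gamma-f(\mathbf{w})+\lambda(\norm{\mathbf{w}}^2-\omega)\geq0$ for all $\mathbf{w}$. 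Writing this as a quadratic form in $[1,\ \mathbf{w}^\top]^\top$ and collecting the blocks of \eqref{eq:Regquad}, nonnegativity for all $\mathbf{w}$ becomes positive semidefiniteness of the symmetric matrix with $(1,1)$ block $\gamma-\lambda\omega+x_0^\top O_1 x_0$, $(1,2)$ block $x_0^\top O_2^\top$, and $(2,2)$ block $O_3+\lambda\mathbb{I}$, from which the term $[\,x_0^\top\Phi^{0\top};\ \Phi^{w\top}\,]\,C\,[\,\Phi^0 x_0\ \ \Phi^w\,]$ is subtracted.

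Finally, since $\mathcal{Q}\succ0$ and $\mathcal{R}\succ0$, the matrix $C=\textup{blkdiag}(\mathcal{Q},\mathcal{R})$ is positive definite and $C^{-1}$ exists, so applying the Schur complement with respect to a $C^{-1}$ block converts the inequality above, which is bilinear in $\Phi$, into the LMI \eqref{eq:ROLMI} that is affine in $\Phi$, $\lambda$ and $\gamma$. Combining this with the affine SLP constraint \eqref{eq:SLPconstr} and minimising $\gamma$ jointly over $\Phi,\lambda,\gamma$ then reproduces $\min_\Phi\max_{\norm{\mathbf{w}}^2=\omega} f$, giving $\gamma^*=\textup{Regret}^*$; the controller is recovered as $\mathcal{K}^*=\Phi_u^*\Phi_x^{*-1}$ by \cite[Theorem~2.1]{anderson2019system}, and since $\gamma^*$ upper-bounds the worst-case regret over all admissible $\mathbf{w}$, the realised regret satisfies \eqref{eq:RegB}. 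The step I expect to be most delicate is establishing that the two reformulations are \emph{exact} rather than merely conservative: the tightness of the ball relaxation for a general (nonzero) initial condition, where the linear and constant terms in $\mathbf{w}$ can in principle push the adversary's optimum to the interior, together with the lossless application of the S-procedure to the nonconvex QCQP. Once these are secured, the Schur complement linearisation is routine.
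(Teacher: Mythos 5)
Your proposal is correct and follows essentially the same route as the paper: dualise the single-constraint inner QCQP via the S-procedure/strong duality (Slater holds at $\mathbf{w}=\bm{0}$) to get a $2\times 2$ block matrix inequality, then apply the Schur complement using $C\succ 0$ to obtain the LMI \eqref{eq:ROLMI}, with the controller recovered as $\mathcal{K}^*=\Phi_u^*\Phi_x^{*-1}$. The one step you flag as delicate closes immediately from the fact you already invoked: regret nonnegativity gives $\Phi^\top C\Phi-O\succeq 0$, hence the principal submatrix $\Phi^{w\top}C\Phi^w-O_3\succeq 0$, so the inner objective is convex in $\mathbf{w}$ and its maximum over the ball $\norm{\mathbf{w}}^2\leq\omega$ is attained at an extreme point, i.e.\ on the sphere $\norm{\mathbf{w}}^2=\omega$, for any $x_0$.
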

\begin{proof}
As the inner maximisation in \eqref{eq:Regquad} is strictly feasible using $\mathbf{w}=\bm{0}$, we can arrive at the strongly dual SDP formulation, as shown in e.g. \cite[Appendix B]{boyd2004convex},
\begin{subequations}
\begin{align}
\min_{\Phi,\gamma, \lambda} \;&\gamma \\
\textup{s.t. } &\lambda\geq 0 \\
&\hspace{-0cm}\begin{bmatrix} \mathbb{I}-\mathcal{Z}\mathcal{A} & -\mathcal{Z}\mathcal{B} \end{bmatrix}\Phi = \mathcal{E} \\
&\hspace{-0.8cm}\begin{bmatrix}
x_0^\top(O_1{-}\Phi^{0\top}C\Phi^0)x_0{-}\lambda\omega{+}\gamma  & x_0^\top O_2^\top{-}x_0^\top\Phi^{0\top}C\Phi^w \\ O_2x_0{-}\Phi^{w\top}C\Phi^0x_0 & \lambda\mathbb{I}{+}O_3{-}\Phi^{w\top}C\Phi^w
\end{bmatrix}\succeq0 \label{eq:dualLMI} 
\end{align}
\end{subequations} 
To arrive at formulation \eqref{eq:ROSLS}, we decompose \eqref{eq:dualLMI} into
\begin{equation*}
\!\begin{bmatrix}
x_0^\top O_1x_0{-}\lambda\omega{+}\gamma & x_0^\top O_2^\top \\ O_2x_0 & O_3{+}\lambda\mathbb{I}
\end{bmatrix}
-\begin{bmatrix}
x_0^\top\Phi^{0\top} \\ \Phi^{w\top}
\end{bmatrix}
\!C\!
\begin{bmatrix}
\Phi^0x_0 &  \Phi^w
\end{bmatrix} \succeq 0,
\end{equation*}
where the Schur complement can be applied, see e.g. \cite{vanantwerp2000tutorial}, as $C\succ0$ which follows from $\mathcal{Q}\succ 0$ and $\mathcal{R}\succ0$. This results in \eqref{eq:ROLMI}, which is linear in the optimisation variables. Through strong duality and the equivalence of LMIs using Schur complement, it holds that $\textup{Regret}\leq\gamma^*$ for any disturbance sequence such that $\norm{\mathbf{w}}^2\leq \omega$. The regret optimal control matrix $\mathcal{K}^*$ follows from the SLP in Section~\ref{sec:SLP}. 
\end{proof}
Note that the solution to Theorem~\ref{thrm:ROSLS} is not guaranteed to be unique nor is it guaranteed to be stable as a finite task horizon is considered. Furthermore, the provided bound \eqref{eq:RegB} is tight, i.e. there exists a disturbance sequence $\mathbf{w}^*$ with $\norm{\mathbf{w}^*}^2=\omega$ for any solution of \eqref{eq:ROSLS} such that using $\mathbf{u}=\mathcal{K}^*\mathbf{x}$ results in
\begin{equation}
\textup{Regret} = \textup{Regret}^*=\gamma^*,
\end{equation}
which follows directly from strong duality of \eqref{eq:DynReg1} and \eqref{eq:ROSLS}.
In practice, however, the maximal admissible disturbance energy is often unknown or can only be determined conservatively, i.e. $\omega$ is not available. Therefore, we provide a regret bound which only depends on the disturbance energy which the system actually experiences.

\begin{corollary}\label{cor:bound}
The dynamic regret achieved through the control sequence $\mathbf{u}=\mathcal{K}^*\mathbf{x}=\Phi_u^*\Phi_x^{*-1}\mathbf{x}$ resulting from \eqref{eq:ROSLS} in Theorem~\ref{thrm:ROSLS} for any given disturbance sequence and initial state is upper bounded by
\begin{equation}\label{eq:RegBDist}
\textup{Regret}\leq \sigma_{\textup{max}}(\Phi^{*\top}C\Phi^*-O)(\norm{x_0}^2+\norm{\mathbf{w}}^2).
\end{equation}
\end{corollary}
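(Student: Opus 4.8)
The plan is to write the incurred regret as a single quadratic form in the stacked vector $\bm{\delta}=[x_0^\top,\mathbf{w}^\top]^\top$ and then bound it by a Rayleigh-quotient estimate. First I would substitute the system level parametrisation $[\mathbf{x};\mathbf{u}]=\Phi^*\bm{\delta}$, recovered from an optimal solution of \eqref{eq:ROSLS} via \eqref{eq:SLPrc}, into the total cost, yielding $J(x_0,\mathbf{u},\mathbf{w})=\bm{\delta}^\top\Phi^{*\top}C\Phi^*\bm{\delta}$ with $C=\textup{blkdiag}(\mathcal{Q},\mathcal{R})$, and combine it with the closed-form optimal non-causal cost $J^*(\bm{\delta})=\bm{\delta}^\top O\bm{\delta}$ from Section~\ref{sec:OptNCCont}. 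Since the SLP reproduces the exact closed-loop trajectory for any realised disturbance, this gives the identity $\textup{Regret}=\bm{\delta}^\top(\Phi^{*\top}C\Phi^*-O)\bm{\delta}$, matching the objective in \eqref{eq:Reg2} evaluated at $\Phi^*$.

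Next I would set $M:=\Phi^{*\top}C\Phi^*-O$ and note that $M$ is symmetric, so its induced $2$-norm equals its largest singular value, $\norm{M}=\sigma_{\textup{max}}(M)$. Applying Cauchy--Schwarz together with the definition of the induced norm gives $\bm{\delta}^\top M\bm{\delta}\leq\norm{\bm{\delta}}\,\norm{M\bm{\delta}}\leq\sigma_{\textup{max}}(M)\norm{\bm{\delta}}^2$. Finally, because $\bm{\delta}=[x_0^\top,\mathbf{w}^\top]^\top$ stacks the initial state and the disturbance, $\norm{\bm{\delta}}^2=\norm{x_0}^2+\norm{\mathbf{w}}^2$, which immediately yields the claimed bound \eqref{eq:RegBDist}.

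This estimate is essentially a one-line calculation once the quadratic form is established, so I do not anticipate a genuine obstacle. The only subtlety worth flagging is that $M$ need not be positive semi-definite, so one cannot replace $\sigma_{\textup{max}}(M)$ by the top eigenvalue $\lambda_{\textup{max}}(M)$; however, $\lambda_{\textup{max}}(M)\leq\sigma_{\textup{max}}(M)$ holds unconditionally, so the singular-value bound remains valid. It is also worth emphasising that, unlike \eqref{eq:RegB}, this bound is expressed directly in terms of the realised disturbance energy $\norm{\mathbf{w}}^2$ and therefore requires no prior knowledge of the energy level $\omega$, which is precisely the motivation stated before the corollary.
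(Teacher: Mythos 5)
Your proof is correct and takes essentially the same route as the paper: both establish the identity $\textup{Regret}=\bm{\delta}^\top(\Phi^{*\top}C\Phi^*-O)\bm{\delta}$ and then bound this quadratic form by the induced $2$-norm of the matrix times $\norm{\bm{\delta}}^2=\norm{x_0}^2+\norm{\mathbf{w}}^2$. The only cosmetic difference is the intermediate inequality: the paper factors the quadratic form through the matrix square root $(\Phi^{*\top}C\Phi^*-O)^{\frac{1}{2}}$, which implicitly uses that this matrix is positive semi-definite (true here, since a causal controller never beats the non-causal optimum), whereas your Cauchy--Schwarz step sidesteps that assumption entirely.
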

\begin{proof}
The regret system \eqref{eq:sysdyn} experiences under the control law $\mathbf{u}=\Phi_u^*\Phi_x^{*-1}\mathbf{x}$ is given by $\textup{Regret}=\bm{\delta}^\top(\Phi^{*\top}C\Phi^*-O)\bm{\delta}$ for any initial state and disturbance sequence. We can therefore show that
\begin{equation*}
\begin{split}
\textup{Regret}&=\norm{\bm{\delta}^\top(\Phi^{*\top}C\Phi^*-O)\bm{\delta}} \\
&= \norm{(\Phi^{*\top}C\Phi^*-O)^{\frac{1}{2}}\bm{\delta}}^2 \\
&\leq \norm{(\Phi^{*\top}C\Phi^*-O)}\norm{\bm{\delta}}^2\!\!,
\end{split}
\end{equation*}
which follows from the induced matrix norm definition and results in the provided bound as $\displaystyle\norm{\bm{\delta}}^2=\norm{x_0}^2+\norm{\mathbf{w}}^2$.
\end{proof}
The bound provided in Corollary~\ref{cor:bound} is again tight, in the sense that there exist $x_0^*$ and $\mathbf{w}^*$, such that $\textup{Regret}=\sigma_{\textup{max}}(\Phi^{*\top}C\Phi^*-O)(\norm{x_0^*}^2+\norm{\mathbf{w^*}}^2)$. Note that $$\sigma_{\textup{min}}(\Phi^{*\top}C\Phi^*-O) \leq \frac{\gamma^*}{\omega+\norm{x_0}^2}\leq \sigma_{\textup{max}}(\Phi^{*\top}C\Phi^*-O)$$ as the computation of $\gamma^*$ in \eqref{eq:ROSLS} makes use of a known initial condition with $\norm{\bm{\delta}}^2=\omega+\norm{x_0}^2$, which yields a different incurred regret than another initial condition with the same norm. If $x_0=0$, then the solution of \eqref{eq:ROSLS} achieves exactly the regret bound in \cite{goel2021regret}, i.e. $\textup{Regret}\leq\gamma^*\norm{w}^2$. If the initial condition is unknown or adversarially chosen, then $x_0$ can be used as an optimisation variable in the regret minimisation, i.e.
\begin{equation*}
\begin{split}
\min_{\Phi}& \max_{\norm{\bm{\delta}}^2=1} \bm{\delta}^\top (\Phi^\top C\Phi -O)\bm{\delta}  \\
\textup{s.t. } & \begin{bmatrix} \mathbb{I}-\mathcal{Z}\mathcal{A} & -\mathcal{Z}\mathcal{B} \end{bmatrix}\Phi = \mathcal{E},
\end{split}
\end{equation*}
which is equivalent to 
\begin{equation}\label{eq:ROSLSnorm}
\begin{split}
\min_{\Phi} & \norm{ \Phi^\top C\Phi -O}^2\\
\textup{s.t. } & \begin{bmatrix} \mathbb{I}-\mathcal{Z}\mathcal{A} & -\mathcal{Z}\mathcal{B} \end{bmatrix}\Phi = \mathcal{E},
\end{split}
\end{equation}
and is again solvable as an SDP linear in $\Phi$ through an epigraph transformation and using the fact that $\norm{ \Phi^\top C\Phi {-}O}^2{\leq} \gamma$ is equivalent to $ \Phi^\top C\Phi -O\preceq \gamma\mathbb{I}$, see e.g. \cite{vanantwerp2000tutorial}, for which the Schur complement can be used to express the constraint as an LMI, as similarly done in the proof of Theorem~\ref{thrm:ROSLS}. At this point, we would like to draw attention to the fact that similarly as to how the regret optimal problem is transformed into an $\mathcal{H}_\infty$-problem in \cite{goel2021regret}, the optimisation problem \eqref{eq:ROSLSnorm} bears similarities to the $\mathcal{H}_\infty$-synthesis using the SLP given in \cite{anderson2019system}, where the induced $2$-norm $\norm{\Phi^\top C\Phi}$ is minimised. 

\begin{remark}
It is often desirable to obtain regret bounds which are sublinear in the time horizon $T$ of the task as it implies that the benchmark policy is learned. While the regret bound \eqref{eq:RegBDist} in Corollary~\ref{cor:bound} depends on $\norm{\mathbf{w}}^2$, which clearly depends on the time horizon for most physical systems, the maximum singular value of $(\Phi^{*\top}C\Phi^*-O)$ is also dependent on $T$. If system \eqref{eq:sysdyn} is LTI, i.e. has constant system matrices, and the incurred stage costs also have constant $Q$ and $R$, then a time-independent constant $\tau\geq\sigma_{\textup{max}}(\Phi^{*\top}C\Phi^*-O)$ can be found $\forall T$ through the solution of the infinite horizon dynamic regret problem in \cite{sabag2021regret}. 
Albeit more conservative, the regret bound $\tau(\norm{x_0}^2+\norm{\mathbf{w}}^2)$ is dependent on $T$ only through the disturbance energy $\norm{\mathbf{w}}^2$. Sublinearity with respect to $T$ depends therefore on whether a sublinear bound on the disturbance energy with respect to $T$ exists. A disturbance which is pointwise bounded in time with the same bound as considered in Section~\ref{sec:PWB} achieves a regret bound $\mathcal{O}(T)$. Note that this is common in dynamic regret bounds in control, e.g. in \cite{nonhoff2021online} the dynamic regret bound is only sublinear in $T$ if the optimal pathlength is sublinear in $T$, as the best possible control sequence is considered as a benchmark instead of the restricted policy classes considered in policy regret.
\end{remark}
\section{STRUCTURED DYNAMIC REGRET PROBLEMS} \label{sec:Ext}
The SDP formulation which allows recovering the regret optimal controller in \eqref{eq:ROSLS} readily enables us to incorporate further structure of the disturbance signal other than the $\ell_2$-signal assumption considered in \cite{goel2021regret} and in \eqref{eq:DynReg1} as shown in Section~\ref{sec:PWB}. In Section~\ref{sec:Constr}, the ability to generate controllers such that state and input constraints on the system will be satisfied during an episode is presented.
\subsection{Pointwise bounded disturbances}\label{sec:PWB}
In this section, we show how ellipsoidal bounds on the disturbance can be leveraged in the dynamic regret synthesis problem and provide upper bounds on the dynamic regret incurred as well as lower bounds on the optimal solution.
\begin{assumpt}\label{ass:pwbell}
The disturbance $w_k$ in \eqref{eq:sysdyn} lies in a known, compact ellipsoid at every time step, i.e. 
$$w_k\in\mathcal{W}=\{w\in\mathbb{R}^n | \; w^\top P w\leq 1\},\ \forall k=0,\dots,T-1, $$
with $P \in\mathbb{R}^{n\times n}, P\succ0$.
\end{assumpt}
\noindent The resulting dynamic regret problem is given by
\begin{equation}\label{eq:RegPWB} 
\vspace{-0.1cm}
\textup{Regret}_{\textup{PWB}}^* = \min_\mathbf{u} \max_{\mathbf{w}\in \mathcal{W}^{T}} (J(x_0,\mathbf{u},\mathbf{w}) - J^*(x_0,\mathbf{w})). 
\end{equation}
The energy bound $\omega$ in \eqref{eq:DynReg1} is chosen such that $\norm{\mathbf{w}}^2\leq \omega$ if $\mathbf{w}\in\mathcal{W}^T$ in the remainder of this section. Note that the inner maximisation in \eqref{eq:RegPWB} is now a QCQP with $T$ constraints, and strong duality does not necessarily hold for this problem, see e.g. \cite{polik2007survey} and \cite{zheng2012zero} for discussions and counterexamples. Nonetheless, this problem can be approximated via an SDP formulation with regret no worse than \eqref{eq:DynReg1}. Before stating the main theorem of this section, we first present a result from \cite[Theorem 2]{ye1999approximating}, as restated in \cite[Proposition 4.10.5]{ben2001lectures}.
\begin{lemma}\label{lem:QCQPDual}
Consider the optimisation problem 
$$\bar{p}^* = \max_{x} x^\top D_0 x \textup{ s.t. } x^\top D_i x \leq c_i \; \forall i= 1,\dots, q$$
then the optimality ratio to the solution $\bar{d}^*$ of its dual SDP is given by 
$$  \bar{p}^* \geq \frac{2}{\pi} \bar{d}^*, $$
if the following conditions hold
\begin{enumerate}
\item The matrices $D_1, \dots, D_q$ commute with each other,
\item Slater's condition holds and there exists a combination of the matrices $D_1, \dots, D_q$ with nonnegative coefficients which is positive definite,
\item $D_0\succeq 0$.
\end{enumerate}
\end{lemma}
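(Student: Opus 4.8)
The plan is to prove the approximation bound by a Gaussian randomized-rounding argument applied to the semidefinite relaxation, in which all three hypotheses play distinct and essential roles. First I would record the two standard facts that frame the estimate: the SDP in question is the lifting relaxation obtained by replacing $xx^\top$ with a matrix variable $X\succeq0$ subject to $\operatorname{tr}(D_iX)\le c_i$, and by weak duality its value $\bar d^*$ is always an upper bound, $\bar p^*\le\bar d^*$. Slater's condition in hypothesis~(2) guarantees strong duality, so that $\bar d^*$ coincides with the relaxation value and is attained at some $X^*\succeq0$, while the existence of a positive-definite nonnegative combination of the $D_i$ ensures the feasible set is bounded, so both problems are well posed. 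The entire content of the lemma is therefore the reverse inequality $\bar p^*\ge\frac2\pi\bar d^*$.

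The key reduction uses hypothesis~(1). Since $D_1,\dots,D_q$ commute, I would simultaneously diagonalize them by an orthogonal $U$, writing $y=U^\top x$ so that each constraint reads $\sum_j d_{i,j}y_j^2\le c_i$ and hence depends on $y$ only through the squared magnitudes $y_j^2$. Let $Y^*=U^\top X^*U$ and, abusing notation, continue to write $D_0$ for $U^\top D_0U$, which remains PSD with $\operatorname{tr}(D_0Y^*)=\bar d^*$. Draw a Gaussian vector $\xi\sim\mathcal N(0,Y^*)$ and round it to $\hat\xi_j=\operatorname{sign}(\xi_j)\sqrt{Y^*_{jj}}$. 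Because $\hat\xi_j^2=Y^*_{jj}$ \emph{deterministically}, every constraint value equals $\operatorname{tr}(D_iY^*)\le c_i$, so $\hat\xi$ is feasible for \emph{every} sign pattern; this is exactly where commutativity is needed. It then remains to control the objective in expectation.

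For the objective I would invoke the Gaussian sign identity $\mathbb E[\operatorname{sign}(\xi_j)\operatorname{sign}(\xi_k)]=\frac2\pi\arcsin(\rho_{jk})$, where $\rho_{jk}=Y^*_{jk}/\sqrt{Y^*_{jj}Y^*_{kk}}$, giving
$$\mathbb E[\hat\xi^\top D_0\hat\xi]=\tfrac2\pi\sum_{j,k}M_{jk}\arcsin(\rho_{jk}),\qquad M:=S D_0 S,\ S=\operatorname{diag}(\sqrt{Y^*_{jj}}),$$
while $\operatorname{tr}(D_0Y^*)=\sum_{j,k}M_{jk}\rho_{jk}$. The desired bound thus reduces to $\langle M,\arcsin(R)-R\rangle\ge0$, where $R=(\rho_{jk})$ is a correlation matrix. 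Here hypothesis~(3) enters: $D_0\succeq0$ forces $M=S D_0 S\succeq0$. I would then show $\arcsin(R)-R\succeq0$ by expanding the entrywise arcsine in its Taylor series $\arcsin(\rho)-\rho=\sum_{k\ge1}c_k\rho^{2k+1}$ with $c_k\ge0$ and applying the Schur product theorem: each Hadamard power $R^{\circ(2k+1)}$ is PSD, so the whole sum is PSD. Since the Frobenius inner product of two PSD matrices is nonnegative, the reduction holds, giving $\mathbb E[\hat\xi^\top D_0\hat\xi]\ge\frac2\pi\bar d^*$. As every realization of $\hat\xi$ is feasible, some realization attains at least the expectation, whence $\bar p^*\ge\frac2\pi\bar d^*$.

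I expect the main obstacle to be recognizing that the sign-rounding scheme is the right device and establishing the final matrix inequality $\arcsin(R)-R\succeq0$. The delicate point is that feasibility must be preserved \emph{deterministically} — which forces precisely the $\operatorname{sign}(\xi_j)\sqrt{Y^*_{jj}}$ construction together with the commutativity hypothesis — while the $\tfrac2\pi$ loss is absorbed entirely into the objective through the arcsine law and the Schur-product argument. Everything else, namely strong duality, boundedness of the feasible set, and the reduction to a Frobenius inner product, is routine once this structure is in place.
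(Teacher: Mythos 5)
Your proposal is correct: it is a faithful reconstruction of the Nesterov--Ye randomized-rounding argument (simultaneous diagonalization of the commuting $D_i$, Gaussian sign rounding with deterministic feasibility, the Grothendieck identity $\mathbb{E}[\operatorname{sign}(\xi_j)\operatorname{sign}(\xi_k)]=\tfrac{2}{\pi}\arcsin(\rho_{jk})$, and $\arcsin(R)\succeq R$ via the Schur product theorem), which is exactly the proof given in the sources the paper cites, namely \cite[Theorem 2]{ye1999approximating} and \cite[Proposition 4.10.5]{ben2001lectures}. The paper itself offers no proof beyond this citation, so your argument coincides with the paper's (referenced) approach rather than departing from it.
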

\begin{proof}
The proof is detailed in \cite[Proposition 4.10.5]{ben2001lectures} and \cite[Theorem 2]{ye1999approximating}.
\end{proof}
By showing that the conditions in Lemma~\ref{lem:QCQPDual} are fulfilled, we can provide a lower bound on the optimal regret for adversarial disturbances \eqref{eq:RegPWB} through the solution of the SDP dual problem.
\begin{thrm} \label{th:RegPWBell}
Under Assumption~\ref{ass:pwbell}, the solution of
\begin{gather}\label{eq:ROSLSPWB}
\begin{align}
\bar{\gamma}^*=\min_{\bar{\Phi},\bar{\lambda}} \;&\sum_{i=0}^T\bar{\lambda}_i \nonumber \\
\textup{s.t. } &\bar{\lambda}_i\geq 0 \; \forall i=0,\dots, T \nonumber \\
& \begin{bmatrix}  \mathbb{I}-\mathcal{Z}\mathcal{A} & -\mathcal{Z}\mathcal{B} \end{bmatrix}\bar{\Phi} = \mathcal{E},  \\
&\hspace{-1.0cm} \begin{bmatrix}
x_0^\top O_1x_0{+}\bar{\lambda}_T & x_0^\top O_2^\top &  x_0^\top\bar{\Phi}^{0\top} \\ 
O_2 x_0 & O_3{+}\sum\limits_{i=0}^{T-1}\bar{\lambda}_i\mathcal{P}_{i+1} & \bar{\Phi}^{w\top} \\ 
\bar{\Phi}^0x_0 & \bar{\Phi}^w & C^{-1}
\end{bmatrix}\succeq 0,  \nonumber
\end{align}
\end{gather}
which is linear in the optimisation variables $\bar{\Phi}$ and $\bar{\lambda}_i$ and where $[[\mathcal{P}_i]]_{i,i}=P$ with $[[\mathcal{P}_i]]_{j,l}=0$ for $(j,l)\neq (i,i)$, bounds the optimal dynamic regret incurred by system \eqref{eq:sysdyn}, under the control sequence $\mathbf{u}=\bar{\Phi}_u^*\bar{\Phi}_x^{*-1}\mathbf{x}$, as follows
\begin{equation} \label{eq:RegB1}
 \frac{2}{\pi} \bar{\gamma}^*\leq \textup{Regret}_{\textup{PWB}}^* \leq \bar{\gamma}^* \leq\gamma^*,
\end{equation}
where $\gamma^*$ denotes the solution of \eqref{eq:ROSLS}.
\end{thrm}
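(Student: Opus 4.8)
The plan is to recognise \eqref{eq:ROSLSPWB} as the Lagrangian dual of a homogenised version of the inner maximisation in \eqref{eq:RegPWB}, and then to transfer the approximation guarantee of Lemma~\ref{lem:QCQPDual} through the outer minimisation over $\Phi$. For a fixed response $\Phi$ satisfying \eqref{eq:SLPConstraint}, the same substitution leading to \eqref{eq:Regquad} writes the inner problem as $p^*(\Phi)=\max_{w_k^\top P w_k\leq 1}(\mathbf{w}^\top M\mathbf{w}+2b^\top\mathbf{w}+c)$ with $M:=\Phi^{w\top}C\Phi^w-O_3$, $b:=(\Phi^{w\top}C\Phi^0-O_2)x_0$ and $c:=x_0^\top(\Phi^{0\top}C\Phi^0-O_1)x_0$. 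I would homogenise this by adjoining a scalar $t$ with $t^2\leq 1$, yielding $\bar p^*(\Phi)=\max [t;\mathbf{w}]^\top D_0[t;\mathbf{w}]$ over $t^2\leq 1$ and $w_k^\top P w_k\leq 1$, where $D_0=\left[\begin{smallmatrix} c & b^\top\\ b & M\end{smallmatrix}\right]$. Forming the Lagrangian dual with multiplier $\bar\lambda_T$ on $t^2\leq 1$ and $\bar\lambda_i$ on $w_k^\top Pw_k\leq 1$, and applying the Schur complement with respect to $C^{-1}$ exactly as in the proof of Theorem~\ref{thrm:ROSLS}, reproduces precisely the LMI in \eqref{eq:ROSLSPWB} with objective $\sum_{i=0}^T\bar\lambda_i$; hence $\bar d^*(\Phi):=\min_{\bar\lambda}\sum_i\bar\lambda_i$ over that LMI is the dual SDP value and $\bar\gamma^*=\min_\Phi\bar d^*(\Phi)$. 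The key structural observation is that dynamic regret is nonnegative for every admissible response: a causal $\Phi$ incurs at least the non-causal optimum, so $\bm{\delta}^\top(\Phi^\top C\Phi-O)\bm{\delta}\geq 0$ for all $\bm{\delta}$, i.e. $\Phi^\top C\Phi-O\succeq 0$, and since $D_0$ is a congruence image of this matrix, $D_0\succeq 0$.

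Next I would verify the hypotheses of Lemma~\ref{lem:QCQPDual} for the homogenised QCQP: the constraint matrices have mutually disjoint block supports and therefore commute, any strictly positive combination is block diagonal with blocks $P\succ 0$ and hence positive definite, $t=0,\mathbf{w}=\bm{0}$ is strictly feasible, and $D_0\succeq 0$ supplies condition~3. Crucially, because $D_0\succeq 0$ the homogenised objective is convex, so its maximum over the compact convex product of the interval and the ellipsoids is attained at an extreme point with $t=\pm1$; using the symmetry of the ellipsoidal constraints this gives $\bar p^*(\Phi)=p^*(\Phi)$, so the lift is exact. Lemma~\ref{lem:QCQPDual} together with weak duality then yields, for every admissible $\Phi$, the sandwich $\tfrac{2}{\pi}\bar d^*(\Phi)\leq p^*(\Phi)\leq\bar d^*(\Phi)$. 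I expect this exactness step to be the main obstacle: the naive homogenisation is generally not tight, and it is precisely the nonnegativity of regret (rather than any assumption on the sign of $c$) that forces $D_0\succeq 0$ and rescues it, while the commutativity and definiteness checks are what permit invoking the $\tfrac{2}{\pi}$ ratio.

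The two outer bounds then follow by minimising over $\Phi$. From $p^*(\Phi)\leq\bar d^*(\Phi)$ pointwise I obtain $\textup{Regret}_{\textup{PWB}}^*=\min_\Phi p^*(\Phi)\leq\min_\Phi\bar d^*(\Phi)=\bar\gamma^*$, and from $p^*(\Phi)\geq\tfrac{2}{\pi}\bar d^*(\Phi)\geq\tfrac{2}{\pi}\bar\gamma^*$ pointwise I obtain $\textup{Regret}_{\textup{PWB}}^*\geq\tfrac{2}{\pi}\bar\gamma^*$, which together give the first two inequalities in \eqref{eq:RegB1}.

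For the final inequality $\bar\gamma^*\leq\gamma^*$ I would construct a feasible point of \eqref{eq:ROSLSPWB} from an optimiser $(\Phi^*,\lambda^*,\gamma^*)$ of \eqref{eq:ROSLS}. Setting $\bar\Phi=\Phi^*$, $\bar\lambda_i=\lambda^*\norm{P^{-1}}$ for $i<T$ and $\bar\lambda_T=\gamma^*-\lambda^*\omega$, one has $\bar\lambda_iP\succeq\lambda^*\mathbb{I}$ and hence $\sum_{i<T}\bar\lambda_i\mathcal{P}_{i+1}\succeq\lambda^*\mathbb{I}$; comparing the Schur-reduced LMI of \eqref{eq:ROSLSPWB} with the energy certificate \eqref{eq:dualLMI} shows the former equals the latter plus $\textup{blkdiag}(0,\sum_{i<T}\bar\lambda_i\mathcal{P}_{i+1}-\lambda^*\mathbb{I})\succeq 0$, so it is positive semi-definite, while $\bar\lambda_T\geq 0$ follows from the nonnegative $(1,1)$ entry of \eqref{eq:dualLMI} together with $c\geq 0$. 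The resulting objective is $\sum_{i=0}^T\bar\lambda_i=(\gamma^*-\lambda^*\omega)+T\lambda^*\norm{P^{-1}}$, and since $\omega$ is chosen so that $\mathbf{w}\in\mathcal{W}^T$ implies $\norm{\mathbf{w}}^2\leq\omega$, we have $\omega\geq\max_{\mathbf{w}\in\mathcal{W}^T}\norm{\mathbf{w}}^2=T\norm{P^{-1}}$, whence $T\lambda^*\norm{P^{-1}}\leq\lambda^*\omega$ and the objective is at most $\gamma^*$; this exhibits $\bar\gamma^*\leq\gamma^*$ and completes \eqref{eq:RegB1}.
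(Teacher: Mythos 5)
Your proposal is correct and follows essentially the same route as the paper's proof: homogenise the inner QCQP with an auxiliary scalar, verify the three hypotheses of Lemma~\ref{lem:QCQPDual} (commuting block-disjoint constraint matrices, Slater plus a positive-definite combination, and $D_0\succeq 0$ via $\bar{\Phi}^\top C\bar{\Phi}-O\succeq 0$), obtain $\textup{Regret}_{\textup{PWB}}^*\leq\bar{\gamma}^*$ from weak duality, and establish $\bar{\gamma}^*\leq\gamma^*$ by constructing a feasible point of \eqref{eq:ROSLSPWB} from an optimiser of \eqref{eq:ROSLS}. If anything you are more careful than the paper: the explicit exactness argument for the homogenisation (convexity of the lifted objective forcing $t=\pm 1$, then symmetry) and the verification that $\bar{\lambda}_T=\gamma^*-\lambda^*\omega\geq 0$ fill in steps the paper leaves implicit, and your multiplier choice $\lambda^*\norm{P^{-1}}$ in place of the paper's $\lambda^*\omega/T$ is equally valid since $T\norm{P^{-1}}\leq\omega$.
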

\begin{proof}
We first prove the lower bound on $\textup{Regret}_{\textup{PWB}}^*$, by showing that Lemma~\ref{lem:QCQPDual} holds since \eqref{eq:ROSLSPWB} is the dual of \eqref{eq:RegPWB}. The inhomogenous quadratic cost in \eqref{eq:RegPWB} can be rewritten as a homogenous quadratic form in 
$\mathbf{z}=[ \alpha , \mathbf{w}^\top ]^\top$ with $\alpha\in\mathbb{R}$ and the additional constraint $\mathbf{z}^\top \mathcal{Z} \mathbf{z} \leq 1$ with $[\mathcal{Z}]_{1,1}=1$ and $0$ in all other entries, as done similarly in \cite{ye1999approximating}. This reformulation introduces an additional row and column in the LMI condition with $\bar{\gamma}-\sum_{i=0}^T\bar{\lambda}_i$ on the diagonal and $0$ elsewhere, such that $\bar{\gamma}^*=\sum_{i=0}^T\bar{\lambda}^*_i$ is the optimal solution. Note that as $\mathcal{P}_i$ are block diagonal matrices with $P$ on the $i$-th entry and $0$ in all other entries, they commute with each other as their product is always $0$. Additionally, Slater's condition holds and $\sum_{i=1}^{T}\mathcal{P}_i\succ 0$ as $P\succ0$. Finally, $\bar{\Phi}^\top C\bar{\Phi}-O\succeq0$ for all systems satisfying \eqref{eq:SLPConstraint} under a causal control law which incurs a cost bigger or equal than the optimal non-causal cost. 
The bound $ \textup{Regret}_{\textup{PWB}}^* \leq \bar{\gamma}^*$ follows directly from Lagrangian weak duality. Lastly, as without loss of generality we assume that $\norm{\mathbf{w}}^2\leq \omega$ if $\mathbf{w}\in\mathcal{W}^{T}$, it holds that $w_k^\top w_k\leq \frac{\omega}{T}$ for all $w_k\in\mathcal{W}$ and therefore $\frac{\omega}{T}P\succeq \mathbb{I}$, which implies that $\frac{\omega}{T}\sum_{i=1}^{T}\mathcal{P}_i \succeq \mathbb{I}$. By setting $\bar{\lambda}_i= \frac{\lambda^*\omega}{T}$ for $i=0,\dots, T-1$, $\bar{\lambda}_T=\gamma^*-\lambda^*\omega$ and $\bar{\Phi}=\Phi^*$, \eqref{eq:ROSLSPWB} is feasible for $\bar{\gamma}=\gamma^*$ and therefore $\bar{\gamma}^*\leq\gamma^*$.
\end{proof}
Note that similarly, an upper bound on the optimal dynamic regret for adversarial disturbances can also be achieved if the disturbance is pointwise bounded in polytopes, however, Lemma~\ref{lem:QCQPDual} can no longer be applied. 
\subsection{State and input constrained systems}\label{sec:Constr}
In this section, we show how state and input constraints of the form
\begin{equation*}
\begin{split}
\mathcal{X}=\{x\in\mathbb{R}^n\;|\; H_x x\leq \mathbf{1}\}, \; \mathcal{U}=\{u\in\mathbb{R}^m \;|\; H_u u\leq \mathbf{1} \}
\end{split}
\end{equation*}
can be considered in the dynamic regret problem. The constraints are reformulated in the SLP as $H{_z}\Phi\bm{\delta}\leq\mathbf{1}$, $\forall \mathbf{w}\in\mathcal{W}^{T}$ with $H_z = \textup{blkdiag}(\mathbb{I}\otimes H_x, \mathbb{I}\otimes H_u)$, $H_z\in\mathbb{R}^{n_H(T+1){\times} (n{+}m)(T+1)}$,
which can be integrated in \eqref{eq:ROSLSPWB}. Under Assumption~\ref{ass:pwbell}, the state and input constraints can be enforced by considering the worst-case $\mathbf{w}$ exploiting the definition of the dual norm as shown, e.g. in \cite{goulart2006optimization}, as follows $\forall i=1,\dots, n_H$
\begin{gather} \label{eq:rodualconstr1}
\begin{align}
    &\max_{\mathbf{w} \in \mathcal{W}^{T}} [H_z]_{i,:} \Phi \bm{\delta} \leq 1, \nonumber \\
    \Leftarrow & [H_z]_{i,:}\Phi^0x_0+ {\sum_{j=1}^{T}}\norm{[H_{z}]_{i,:} [[\Phi^{w}]]_{:,j}  P^{-\frac{1}{2}}}\leq 1,
\end{align}
\end{gather}
which are second-order conic constraints and can simply be added to \eqref{eq:ROSLSPWB} in order to satisfy the constraints.
\begin{remark}
The constraints \eqref{eq:rodualconstr1} can be added to the synthesis problem \eqref{eq:ROSLSPWB} and an upper bound on the incurred dynamic regret is obtained. The synthesis of the optimal non-causal safe controller and the corresponding cost is discussed in \cite{martin2022safe}.
\end{remark}
\section{NUMERICAL EXAMPLE}
We consider the illustrative example of a discrete-time mass-spring-damper system with the dynamics
\begin{equation}
x_{k+1}=\begin{bmatrix} 1 & T_s \\ -cT_s & 1-dT_s \end{bmatrix}x_k+ \begin{bmatrix} 0 \\ T_s \end{bmatrix}u_k +w_k,
\end{equation}
with spring constant $c=0.2$, damping constant $d=0.1$ and sampling time $T_s=0.1$. The system is controlled over one episode with task length $T=100$ from the initial condition $x_0=[1\; 10]^\top$. At every time step, the system incurs the quadratic stage cost $l(x_k,u_k)=x_k^\top 0.1\mathbb{I}x_k+u_k^\top u_k$. The additive disturbance $w_k$, which the system experiences, is pointwise bounded in the ellipsoid 
\begin{equation}\label{eq:ExDistBound}
w_k\in\mathcal{W}=\{w\in\mathbb{R}^2 \mid w_k^\top w_k\leq 1\} \; \forall k=0,\dots, 99,
\end{equation}
such that the maximal admissible disturbance energy is bounded by $\norm{\mathbf{w}}^2=\mathbf{w}^\top\mathbf{w}\leq \omega = 100$. 
\begin{figure}[thpb]
\vspace{-0.4cm}
   \centering
   \hspace*{-0.53cm}\includegraphics[width=1.15\linewidth]{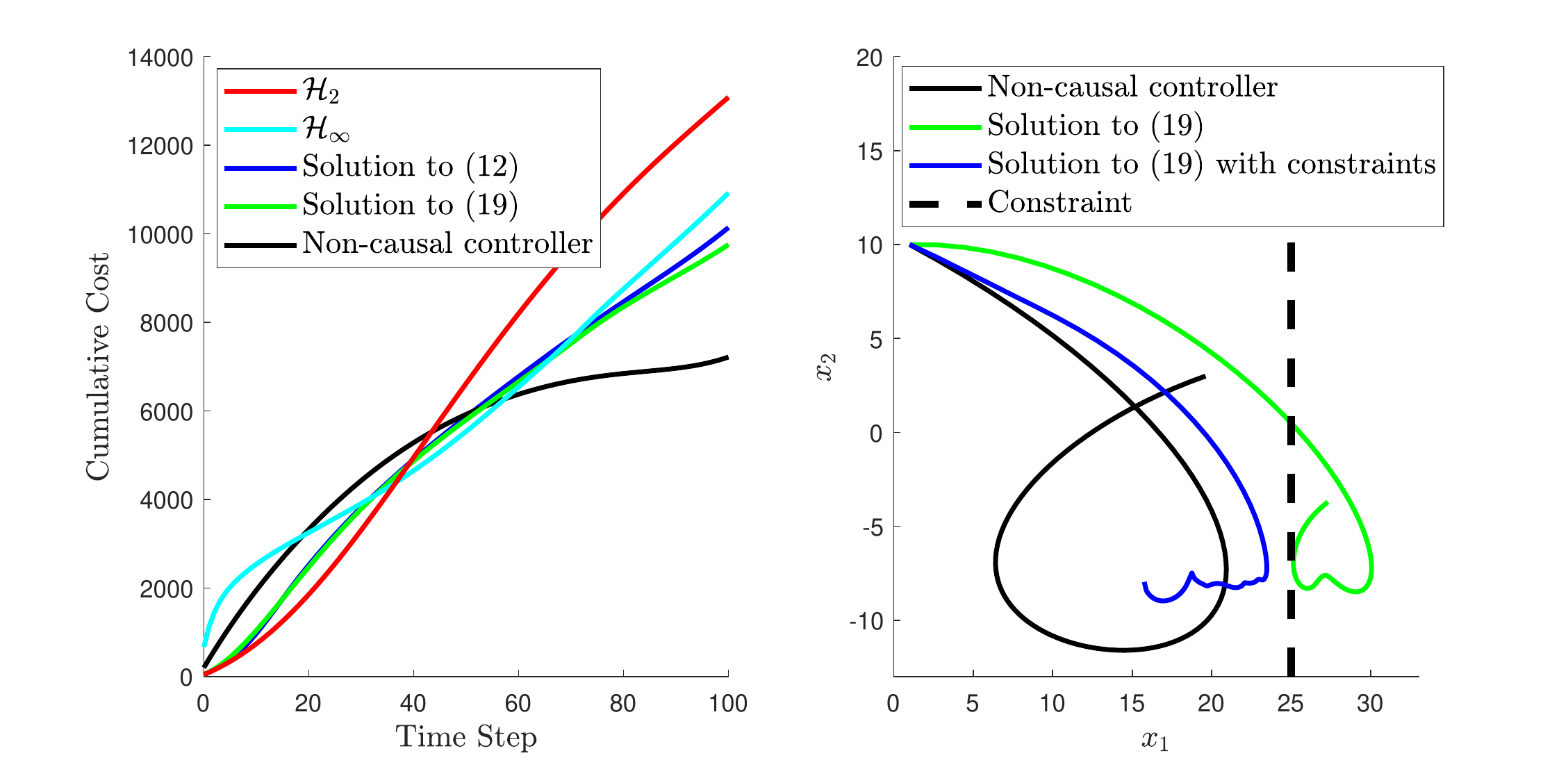} 
   \caption{Simulation of a mass-spring-damper system. \textit{Left:} The cumulative costs for one episode with horizon length $T=100$ are shown for the $\mathcal{H}_2$- and $\mathcal{H}_\infty$-controllers, the solution to the dynamic regret problem for bounded energy disturbances \eqref{eq:ROSLS}, the solution to the dynamic regret problem \eqref{eq:ROSLSPWB} with pointwise bounded energy and the optimal non-causal controller. \textit{Right:} The states arising from the optimal non-causal controller, the solution to \eqref{eq:ROSLSPWB} and the solution to \eqref{eq:ROSLSPWB} subject to the constraints \eqref{eq:ExConstr} are shown.}
   \label{fig:Sim}
\vspace{-0.5cm}
\end{figure}
\begin{table}[thpb]
\setlength{\tabcolsep}{4.7pt}
\caption{Total incurred costs and dynamic regrets}
\label{tbl:1}
\vspace{-0.4cm}
\begin{center}
\begin{tabular}{|c|c|c|c|c|c|}
\hline
Controller & $\mathcal{H}_2$ & $\mathcal{H}_\infty$ & $\eqref{eq:ROSLS}$ & $\eqref{eq:ROSLSPWB}$ & Non-causal 
\\
\hline
Incurred Cost & 13068 & 10925 & 10142 & 9755 & 7218 \\\hline
Incurred Regret & 5868  & 3707 & 2924 & 2537 & 0 \\
\hline
Regret Bounds & $\diagup$  & $\diagup$ & 4178 & 2955 & 0 \\\hline
\end{tabular}
\end{center}
\vspace{-0.2cm}
\end{table}
In Figure~\ref{fig:Sim}, the cumulative cost under the disturbance signal $\mathbf{w}=\frac{1}{\sqrt{2}}\mathbf{1}$ is shown for the optimal non-causal, $\mathcal{H}_2$- and $\mathcal{H}_\infty$-controllers, as well as the solutions to the dynamic regret problems \eqref{eq:ROSLS} and \eqref{eq:ROSLSPWB} which were solved using MOSEK \cite{mosek} and YALMIP \cite{lofberg2004yalmip}. Note that the $\mathcal{H}_2$-controller incurs the highest cost for the experienced disturbance signal and the dynamic regret optimal controller for bounded energy disturbances shows a better performance than the $\mathcal{H}_2$- and $\mathcal{H}_\infty$-controllers for the experienced disturbances. By solving the problem for the known pointwise ellipsoidal bounds \eqref{eq:ExDistBound} in \eqref{eq:ROSLSPWB}, the incurred cost is reduced by $4\%$ and the dynamic regret bound is lowered by $29\%$ from $4178$ to $2955$. The incurred cumulative costs and regrets can be seen in Table~\ref{tbl:1}. 

For the same system and disturbance signal, we now additionally consider the state and input constraints 
\begin{equation}\label{eq:ExConstr}
\mathcal{X}=\{x\in\mathbb{R}^2 \mid x_1 \leq 25\}, \; \mathcal{U}=\{u\in\mathbb{R} \mid \abs{u} \leq 15\},
\end{equation}
which are included in the form of \eqref{eq:rodualconstr1} in \eqref{eq:ROSLSPWB} and are thus guaranteed to be satisfied for all disturbances satisfying \eqref{eq:ExDistBound}. The resulting state trajectory for the constrained solution can be seen in Figure~\ref{fig:Sim} along with the trajectory resulting from the unconstrained solution of \eqref{eq:ROSLSPWB}, which violates the state constraints, and the optimal unconstrained non-causal controller. 
\section{CONCLUSION}
In this letter, we presented an SDP formulation based on the SLP, which allows to optimally solve the dynamic regret problem for bounded energy disturbances, as opposed to the bisection method presented in \cite{goel2021regret}, and achieves a dynamic regret of $\mathcal{O}(\norm{\mathbf{w}}^2)$. The novel SDP formulation enables the synthesis of controllers for structured dynamic regret problems. By utilising known pointwise ellipsoidal bounds on the disturbance, a no worse dynamic regret bound is incurred compared to using the disturbance energy and a lower bound on the optimal solution is provided. Additionally, the formulation allows directly incorporating state and input constraints on the system such that constraint satisfaction is guaranteed for the synthesised controller. Finally, we show that the proposed method can outperform the $\mathcal{H}_2$- and $\mathcal{H}_\infty$ controllers on a mass-spring-damper system. Potential future research includes extending the synthesis to the infinite horizon problem and to computationally more efficient approximations.
\bibliographystyle{IEEEtran} 
\bibliography{IEEEabrv,bibliography}
\end{document}